\newcommand{\N}{\mathcal{N}}
\newcommand{\C}{\mathcal{C}}
\newcommand{\I}{\mathcal{I}}
\newcommand{\E}{\mathcal{E}}
\newcommand{\R}{\mathbb{R}}
\newcommand{\parts}{\mathcal{P}}
\newcommand{\gray}{\cellcolor[HTML]{C0C0C0}}
\newcommand{\1}{\mathds{1}}
\newtheorem{assumption}{Assumption}
\title[AAMAS-2026 Formatting Instructions]{Equilibria in routing games with connected autonomous vehicles will not be strong, as exclusive clubs may form.}
\author{Rafał Kucharski}
\affiliation{
  \institution{Jagiellonian University, Faculty of Mathematics and Informatics}
  \city{Kraków}
  \country{Poland}}
\email{rafal.kucharski@uj.edu.pl}
\author{Anastasia Psarou}
\affiliation{
  \institution{Jagiellonian University, Faculty of Mathematics and Informatics}
  \city{Kraków}
  \country{Poland}}
\email{anastasia.psarou@doctoral.uj.edu.pl}
\author{Natello Descormier}
\affiliation{
  \institution{École polytechnique, Institut Polytechnique de Paris}
  \city{Palaiseau}
  \country{France}}
\email{natello.descormier@polytechnique.edu}
\begin{abstract}
User Equilibrium is the standard representation of the so-called routing game in which drivers adjust their route choices to arrive at their destinations as fast as possible. Asking whether this Equilibrium is strong or not was meaningless for human drivers who did not form coalitions due to technical and behavioral constraints. This is no longer the case for connected autonomous vehicles (CAVs), which will be able to communicate and collaborate to jointly form routing coalitions. 

We demonstrate this for the first time on a carefully designed toy-network example, where a `club` of three autonomous vehicles jointly decides to deviate from the user equilibrium and benefit (arrive faster). The formation of such a club has negative consequences for other users, who are not invited to join it and  now travel longer, and for the system, making it suboptimal and disequilibrated, which triggers adaptation dynamics.

This discovery has profound implications for the future of our cities. We demonstrate that, if not prevented, CAV operators may intentionally disequilibrate traffic systems from their classic Nash equilibria, benefiting their own users and imposing costs on others. These findings suggest the possible emergence of an exclusive CAV elite, from which human-driven vehicles and non-coalition members may be excluded, potentially leading to systematically longer travel times for those outside the coalition, which would be harmful for the equity of public road networks.
\end{abstract}
\keywords{Traffic Assignment, Game-theory, Autonomous Vehicles, Coalition game-theory, User equilibrium, Nash equilibrium, Reinforcement Learning}
\newcommand{\BibTeX}{\rm B\kern-.05em{\sc i\kern-.025em b}\kern-.08em\TeX}
\begin{document}

%%% The following commands remove the headers in your paper. For final 
%%% papers, these will be inserted during the pagination process.

\pagestyle{fancy}
\fancyhead{}

%%% The next command prints the information defined in the preamble.

\maketitle 

\section{Introduction}
%% Should be more less the first page.
Connected Autonomous Vehicles (CAVs) are being gradually introduced into urban environments and are expected to operate at scale alongside human-driven vehicles (HDVs) in the coming years. Equipped with connectivity and optimization capabilities, CAVs may form cooperative coalitions. %or exclusive clubs that improve travel times for their members, potentially at the expense of other road users. 
Potential benefits of such collaboration were already demonstrated in the contexts of platooning, lane-changing, merging, or efficiency at traffic lights \cite{smartcities6050111, 8442706, HESHAMI2024104789, FU2023104415, friedrich2016effect}. 
% Previous studies have explored coalition formation among CAVs in specific contexts such as lane-changing or platooning \cite{smartcities6050111, 8442706, HESHAMI2024104789, FU2023104415, friedrich2016effect}.

In this study, we demonstrate that such coalitions can also emerge in the routing context, where autonomous vehicles coordinate on selecting routes, i.e., paths to arrive at their destinations. 
We simulate a traffic network using the microscopic traffic simulator SUMO \cite{SUMO}, where both human drivers and autonomous vehicles simultaneously make day-to-day routing decisions. 
We carefully design the experiment with such game-theoretical properties that the desired phenomena can be observed. We reproduced a plausible future scenario under which the coalition might be formed in the future: as few as 15 vehicles (10 autonomous and 5 human-driven), two routes, and adaptive traffic lights were sufficient for demonstration purposes. 
%, starting from the well-known User Equilibrium \cite{WARDROP1952}.

\begin{figure*}[t]
    \centering
    \includegraphics[width=0.9\linewidth]{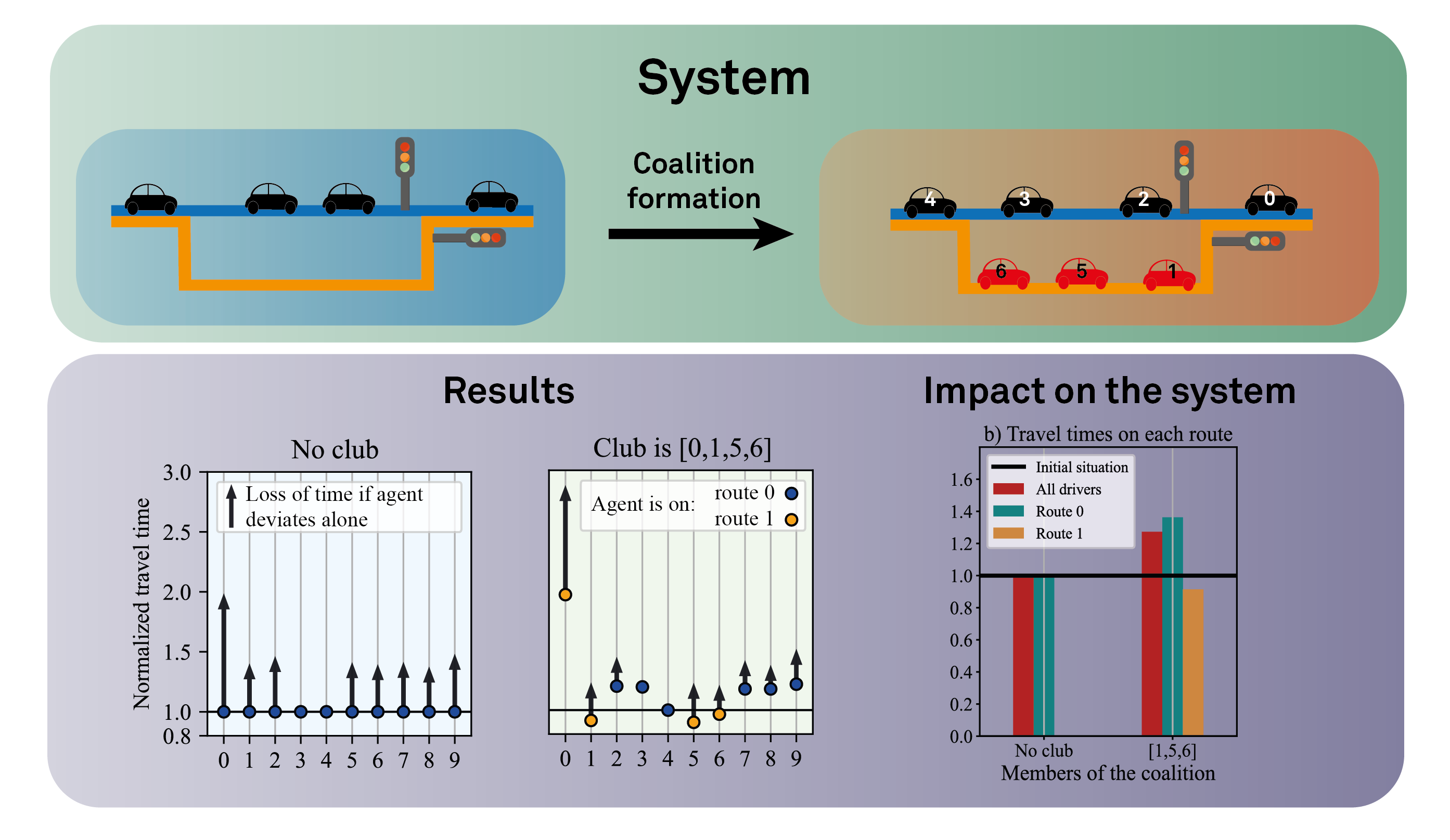}
    \caption{\textbf{CAV coalition formation.} We depart from User Equilibrium (top left) on a simple Two Route Network, where each driver decides to go straight (Route 0), as traveling Route 1 would be slower for each of them. When 10 out of 15 vehicles \emph{mutate} to autonomous vehicles, they reconsider their choices. None of them would benefit by individually deviating to Route 1 (bottom left); however, vehicles 1, 5, and 6 discover that by deviating together, they can arrive faster (bottom center). Not only on average, but also individually. This benefit for the few invited club members is at the cost of other users and system performance (bottom right).}%This is possible thanks to adaptive traffic lights, which will react to greater demand at the southern inlet and allocate more green light to Route 1. After those three vehicles deviate a cascade of dynamics is triggered: i) other vehicles want to join (7 and 0), ii) human drivers are not in equilibrium anymore (they travel longer) and will start behavioural adaptations, iii) system performance degrades.}
    \label{fig:1overview}
\end{figure*}

%This require to observe the  we simulate an urban network with construct payoff matrices from observed travel times. We analyze coalition stability through the lens of coalitional game theory and find that a particular coalition leads the system to a Strong Nash equilibrium, where no group of agents can jointly deviate to improve their outcomes. 

\subsection{Background}
In urban traffic networks, human drivers make daily routing decisions to reach their destinations. 
Every day they commute, i.e., travel from home to work, by choosing among multiple alternative routes. These payoff-maximizing routing decisions are typically guided by habit, information, or exploration aimed at identifying better alternatives, given expectations about the choices of other drivers \cite{inductive_bias_bounded_rationality}. Over time, the aggregate effect of such self-interested and uncoordinated decisions is commonly assumed to converge to a Wardrop equilibrium (or UE, User Equilibrium), where no driver can unilaterally reduce their travel time by switching routes \cite{WARDROP1952, wardrop_jose_correa}. This equilibrium can be interpreted as the non-atomic limit of a Nash Equilibrium \cite{Nash1950} (further denoted $NE$), a situation where no driver can unilaterally change route to improve their own payoff (reduce travel time).

In this classical and well-studied setting, drivers act independently, without the means, motivation, or infrastructure to coordinate on a meaningful scale. As a result, the notion of coalitions, groups of agents coordinating their route choices to reduce travel times \cite{Peters2008}, has not had practical meaning so far. One widely considered state of such a system was a grand coalition, where all the drivers contribute to the total travel time (or mean), which they minimize to reach the System Optimum \cite{WARDROP1952}. In such (equally hypothetical) settings, coalition members would collaborate to reduce the total costs by minimizing the so-called "\textit{Price of Anarchy}" \cite{correa2008geometric, papadimitriou2001} (difference between total costs at UE and SO). However, tempting as it may be for policymakers aiming to reduce transport externalities (such as CO$_2$ emissions or fuel consumption), SO remained unattainable in practice \cite{hoffmann2025wardropian}.

The emergence of Connected Autonomous Vehicles (\emph{CAVs}) is poised to change this landscape. Unlike human drivers, CAVs can leverage real-time information, automation, and communication technologies to make fast and coordinated decisions \cite{coordination_AVs}. To better understand this phenomenon, we aim to experimentally answer the following research questions: 
\begin{enumerate}
\item Can coalitions of CAVs emerge from their self-interested routing behavior in traffic networks? 
\item What will be their impact on the system performance, CAVs outside of coalitions (not-invited), and human drivers (unable to join)?
\item Will coalitions be beneficial for all the individual members' payoffs, or only on average?
\item Will such coalitions be stable and lead to a new Strong Nash equilibrium or induce long-term dynamics?
\item What dynamics will be induced by forming such coalitions, and what are the possible formation processes for coalitions?
\end{enumerate}

To answer the above, we study the route choice problem in a mixed-autonomy traffic network. Our analysis uses RouteRL \cite{RouteRL2024}, a framework that simulates the routing decisions of CAVs and human drivers in real-world networks, integrated with the microscopic traffic simulator SUMO \cite{SUMO}. 

Club formation is explored on an abstract two-route network, depicted in \textbf{Fig. \ref{fig:1overview}}. Initially, CAVs follow the same route as human drivers, but later, some of them coordinate to reroute collectively, each of them improving their own travel times. The NE (bottom left), where no agent can improve its payoff by deviating alone, turns out to be not strong, as players 1, 5, and 6 can form a \emph{club} and deviate together (bottom middle), thereby reducing travel time (bottom right).

\subsection{Related works}

Coalition formation among CAVs has been studied primarily for local coordination tasks, such as platooning, convoying, and lane changing, where cooperation enhances individual and network performance. Platoon formation has been modeled as a coalition-formation game improving flow and safety \cite{smartcities6050111,8442706}. Similar ideas support cooperative lane-changing in mixed traffic, demonstrating higher throughput and lower congestion compared to non-cooperative baselines \cite{HESHAMI2024104789,FU2023104415}. %Related work on communication/localization shows that coalitions can reduce costs and improve accuracy \cite{electronics11040638}.

For route choice, prior studies typically cast AVs as levers for system improvement. With centrally assigned AVs steering mixed traffic toward more efficient states, sometimes with limited AV sacrifice \cite{lazar2021learningdynamicallyrouteautonomous,GUO2022103726,10422527}. These approaches treat cooperation as benevolent or centrally guided, rather than as strategic, self-interested coalition behavior that may harm others.

In routing games, Wardrop user equilibrium (UE, NE) \cite{WARDROP1952} generally differs from the system optimum, with the gap captured by the \emph{price of anarchy} \cite{papadimitriou2001,pigou1920,roughgarden}. We use \emph{strong equilibrium} as the benchmark for profitable joint deviations: a joint action $x$ is strong if no coalition can deviate so that all its members strictly improve \cite{Aumann1959,BERNHEIM19871}.

Our analysis draws on two concepts from coalition theory. \textit{(i) Cooperative)} The core and the Shapley value characterize when a surplus-sharing coalition can exist and how to divide its gains; balancedness and convexity provide existence and stability guarantees \cite{vNM1944,Shapley1952,Shapley1967,Scarf1967,Shapley1971,AumannDreze1974,Myerson1977,Owen1977}. \textit{(ii) Non-cooperative)} Endogenous coalition-formation models explain how such coalitions arise and persist in equilibrium, and when joint deviations are credible (e.g., coalition-proof and farsighted stability) \cite{HartKurz1983,Bloch1996,Yi1997,RayVohra1999,BernheimPelegWhinston1987,Chwe1994}.

To the best of our knowledge, no prior work demonstrates emergent CAV coalitions in the \emph{route choice} context within a microscopic, mixed-traffic simulation where both CAVs and HDVs are self-interested. We show that small CAV \emph{clubs} can deviate from UE and improve payoffs, benefiting their members while imposing longer travel times on others and harming the system's overall performance.

\section{Methodology}

We first formalize the classic routing game, define its Nash equilibrium and strong Nash equilibrium, and provide an algorithm to find coalitions. We follow by defining the club (as we will call the CAVs' coalition), and introducing club properties. We then discuss the dynamics induced by forming the club and use a graph representation of the state transitions to identify strong equilibria of the introduced routing games.

\subsection{Coalitional game}

Let $G = (\N,X,(g_{i})_{i \in \N})$ be a normal-form, nonzero-sum game \cite{Peters2008}. For the sake of demonstration, we consider the binary game with $\{0,1\}$ choices, as depicted in \textbf{Fig.\ref{fig:1overview}} for the TRY network, and leave the generalization for future work. The game $G$ is then formalized as follows:

\begin{itemize}[left=5pt]
    \item $\N = [0,\dots, N-1]$ is the set of \textbf{players} (human drivers or autonomous vehicles).
    \item $X = \{0,1\}^{\N}$ is the \textbf{action space}. In a two-route game, the \textbf{joint action} of all players is a binary vector $x \in X$ of length $N$ (mind $|X| = 2^\N$). For $i \in \N, x_{i}$ is the \textbf{action} of player $i$ (we denote $\bar{x}_i = 1 - x_i$ the opposite action). This can be written as $x = (x_i,x_{-i})$, where $x_{-i}$ refers to the actions within $x$ made by other players. %We can further define for all $i \in \N, X_{i} = \{0,1\}$ such that $X = \prod_{i \in \N} X_{i}$, and similarly $x_{-i} \in X_{-i}.$ % do we need this?
    \item $g_{i}(x)$ is the \textbf{payoff} precomputed for a given joint actions $x \in X$ and a given $i \in \N$. In this paper, we consider deterministic payoffs, which in routing are negative travel costs, simplified here to negative travel times. Thus, by maximising their payoff, each agent reduces their travel time.
    \item the game $G$ has a \textbf{payoff matrix} $g = (g_{i}(x))_{\substack{i\in \N \\ x \in X}} \in \R^{\N \times X}$:

\end{itemize}

\paragraph{Nash Equilibrium}

A joint action $x \in X$ is a \textbf{Nash equilibrium} if no player can improve its payoff by changing its own action while other players keep the same decision. Let $NE \subset X$ be the set of Nash equilibria, formally defined as such: 

\begin{equation}
    NE = \Big\{ x \in X  : \forall i \in \N, g_i(x) \geq g_i\big((\bar{x}_i,x_{-i})\big)\Big\}
\end{equation}

\paragraph{Strong Nash Equilibrium and coalitions}

Given a joint action $x \in X$, we call a \textbf{coalition} any non-empty subset $C \subset \N$ whose members decide to deviate simultaneously from $x$. The set of possible coalitions is $\C = \parts(\N)^* = \{C \subset \N : |C| > 1\}$. $x$ is a \textbf{strong equilibrium} \cite{Aumann1959, BERNHEIM19871} if there exists no coalition such that all members of the coalition improve their own payoff when the deviation happens. Let $SE \subset X$ be the set of strong equilibria:

\begin{equation}
    SE = \Big\{ x \in X  : \forall C \in \C, \exists i \in C, g_i(x) \geq g_i\big((\bar{x}_C,x_{-C})\big)\Big\}
\end{equation}

Immediately $SE \subset NE$. 

Once a payoff matrix has been precomputed, $NE$ and $SE$ can be found using \textbf{Algorithm \ref{alg:SE}.} Which remains practical only as long as both the number of subsets and deviations are small enough to be enumerated, and quickly becomes intractable when any of those increases.

\begin{algorithm}
\caption{Deciding whether $x \in X$ is a Nash and/or a strong equilibrium}\label{alg:SE}
\begin{algorithmic}
\Require A payoff matrix $(g_i)_{i\in\N} \in (\R^X)^\N$
\State $\I \gets \varnothing$
\For{$C \subset \N, |C| > 1$} \Comment{search all subsets candidates}
    \State $y \gets (1 -x_i \text{ for } i \in C, x_i \text{ for } i \notin C) $ \Comment{club deviates from UE}
    \If{$\forall i \in C, g_i(x) < g_i(y)$} \Comment{as it increases payoffs}
        \State $\I \gets \I \cup \{C\}$
    \EndIf
\EndFor
\State \Return $\I$ \Comment{$x \in SE \iff \I = \varnothing$} \State \Comment{$x \notin NE \iff \exists C \in \I, |C| = 1$}
\end{algorithmic}
\end{algorithm}

%\subsection{Club formation mechanism}

\subsection{Club definition}

We assume $G$ has at least a Nash equilibrium $x^0$, and without loss of generality, one may rearrange $X$ such that $x^0 := (0,0,\dots,0)$. We then call a \textbf{club} a group of at least two players that decide, with the knowledge of their own payoffs, to ally together and change their action simultaneously, i.e., deviate from $NE$. Let $I \in \C$:

\begin{equation}
    \label{eq:clubexistence}  
    \text{The club } I \text{ may form} \iff \forall i \in I, 
    \left\{ 
    \begin{array}{l}
        g_i((1_i, 0_{-i})) \leq g_i(x^0)  \\
        g_i((1_I, 0_{-I})) > g_i(x^0) 
    \end{array} 
    \right.
\end{equation}

Given a payoff matrix $g$, we define $\I$ as the set of such clubs. From this definition, it follows that $\I$ has the following properties:
\begin{itemize}
    \item $\I = \varnothing \iff x^0 = (0,0,\dots,0) \in SE$
    \item $\I \neq \varnothing \implies \forall I \in \I, |I| \geq 2$.
\end{itemize}

To ease the notation, for our two-route case we abbreviate to $\1_I$ the joint action $x \in X$ such that all members of $I$ have made the choice 1 (deviated) and other players are still on choice 0 (follow NE):

\begin{equation}
 \1_I = x_i \quad \text{such that} \quad x_i = 
\begin{cases}
1, & \forall\, i \in I, \\
0, & \forall\, i \notin I
\end{cases}
\end{equation}

\paragraph{Club properties}

Once a club forms and its members all deviate to Route 1, the new state no longer has to remain a Nash equilibrium. Agents outside the club may now choose to deviate as well (regardless of not being invited during the club's formation), hoping to improve their payoffs, potentially at the cost of the club's members.

To better understand it, we define two stability conditions for a club, which, combined, are equivalent to Nash equilibrium.
\begin{itemize}
    \item The first condition is \textbf{"internal stability"} or "individual rationality" \cite{Shenoy1979}, under which no member of the club has an incentive to leave, i-e,
    \begin{equation}
        \forall i \in I, g_i(\1_I) \geq g_i(\1_{I\backslash \{i\}})
        \label{eq:internalstab}
    \end{equation}
    \item However, once other players have noticed that a club has formed, the club has no power to stop them from deviating as well. To prevent anyone else from being willing to join, we introduce an \textbf{"external stability"} condition:
    \begin{equation}
        \forall j \notin I, g_j(\1_I) \geq g_j(\1_{I\cup \{j\}})
        \label{eq:externalstab}
    \end{equation}
    \end{itemize}
Additionally, not explored here, we may investigate whether any subset $S$ of agents within $I$ wishes to remove others $I\backslash S$ from the club, which requires searching within subsets, a costly operation. Similarly, club members may want to invite any group of non-members.

\subsection{Club dynamics graph}

To better represent the dynamics triggered by the club formation, we propose the following graph representation.
For any club $I$ identified with \textbf{Algorithm 1}, we build a directed graph $\mathcal{G}$ where nodes are possible coalitions $C$ extending $I$, not identified with the algorithm. They are obtained from agents who would like to join the club $I^+ = \left\{j \notin I : g_j(\1_I) < g_j(\1_{I\cup \{j\}}) \right\}$ (from eq. \ref{eq:internalstab}). The nodes $V(I) = \left\{I \cup \{j\} | j \in I^+ \right\}$. Then, for every $I \in \I$, we build a directed graph $\mathcal{G}^I=(V^I,E^I)$, a tree rooted in $I$ such that \[V^I \subset \C, I \in V^I, E^I = \left\{ (C,C') : C \in V^I, C' \in V(C)\right\}\]
Therefore, for every $J \subset \N$, $J \in V^I$ iff there exists a sequence of coalitions $J_0,J_1,\dots,J_n$ such that $J_0 = I$, $J_n = J$ and $\forall p \in [0\dots n-1], J_{p+1} \in V(J_p)$. Mind, that some possible coalitions $J$ more than one sequence from $I$ may exist (see \textbf{Fig.\ref{fig:2graph}}), making this structure a bush rather than a tree.

External vertices (leaves) of the tree $G^I$ cannot be expanded by adding a new agent (i.e., they are externally stable). Let $\E(G^I)$ be the set of external vertices of $G^I$. $\forall J \in \C$,
\begin{equation}
    J \in \E(G^I) \implies ((1_J,0_{-J}) \in NE) \text{ or } (J \text{ is not internally stable})
\end{equation}

The graph constructed for the club from our example is illustrated in \textbf{Fig.\ref{fig:2graph}}.


\begin{figure}[h] 
\centering 
\resizebox{1\linewidth}{!}{% 
\begin{tikzpicture} 

\usetikzlibrary{fit} 

\node[draw, fill=gray!20] (I) at (0,0) {$I = \{1,5,6\}$}; \node[draw] (A) at (5,0) {$\{0,1,5,6\}$}; 
\node[draw] (B) at (0,-2) {$\{1,5,6,7\}$}; 
\node[draw] (C) at (5,-2) {$\{0,1,5,6,7\}$}; 

\draw[-{Stealth[length=10pt]}] (I) to node[above=0.5mm] {$0 \in I^+$} (A) node[above=5mm, red] {\small $\E(G^I)$}; 
\draw[-{Stealth[length=10pt]}] (I) to node[right=0.5mm] {$7 \in I^+$} (B); 
\draw[-{Stealth[length=10pt]}] (B) to node[above=0.5mm] {$0 \in \{1,5,6,7\}^+$} (C) node[below=2.5mm] {\small internally} node[below=5mm] (T) {\small unstable}; 
\draw[-{Stealth[length=10pt, open]}, dashed] (C) to node[left=0.2mm] {\small $7$ quits} (A); 

\node[draw=red, thick, dotted,fit=(A) (C) (T)] (X) {}; \end{tikzpicture} 
} 

\caption{Illustration of the club dynamics graph $\mathcal{G}^I$ with $I = \{1,5,6\}$, the only coalition identified with \textbf{Algorithm 1}. However, both players 0 and 7 would like to join it given that it is formed (eq. \ref{eq:externalstab}). However, when both of them join, it is no longer attractive to player number 7. This leads to one terminal node, a coalition {$\{0,1,5,6\}$} - which happens to be a strong Nash equilibrium.}

%Once a club of AVs is formed and is externally unstable, several paths to extend the club may exist. In these conditions, two different externally stable clubs (corresponding to the set of external vertices $\E(G^I)$) can be reached by a series of individual additions to $I$, but only one is also internally stable.} 
\label{fig:2graph} 
\end{figure}

Coalitions $I$ at the leaves of a certain $\mathcal{G}^I$, which we further call \textbf{terminal coalitions}, have two important properties:
\begin{itemize}
    \item They will not be modified by players outside the coalition. Therefore, as long as all members can find an arrangement or a way to distribute their payoff that satisfies all of them, the coalition will remain unchanged.
    \item After the members of the root coalition $I$ choose to deviate from $x^0$, terminal coalitions are attainable by a series of individual, selfish decisions by other players. Therefore, if $I$ is formed and as soon as other players are aware of this decision, the transition to one of these coalitions will proceed.
    
\end{itemize}

Based on this, we consider terminal leaves as candidates in our search for strong equilibria. However, in a terminal state (of some coalition identified with \textbf{Algorithm 1} and extended along the graph $\mathcal{G^I}$), new coalition opportunities may arise, and some other players may want to form a new club: in this paper, we omit those second-order phenomena and leave them for further studies. We assert that a strong equilibrium may be identified at any external vertex (leaf) of the graph $\mathcal{G^I}$ for which applying \textbf{Algorithm 1} yields $\I = \emptyset$, intending to formalize this process more rigorously in follow-up studies.

\subsection{Formation process}
The considerations above raise the question: how can clubs be realistically formed in practice? This we do not know, so we outline one possible realization of this idea: future will tell which way was actually exercised. For the case of our routing game, we propose the following interpretation of the underlying process: 
\begin{enumerate}
\item First, let's identify one CAV agent $L$ as the \emph{leader} of a future coalition. 
\item We assume that the leader is the only player aware of the full payoff matrix $g$, not only in the UE ($x^0$) but also for any other hypothetical joint action $x \in X$. 
\item Leader $L$ can apply \textbf{Algorithm \ref{alg:SE}} to identify coalitions $\I$.
\item The leader can decide which coalition $i \in \I$ it wants to form and submits invitations to form a club to other CAV agents. Obviously, $L$ will suggest the club $I \in \I$ only if $L \in I$, i.e. $g_L(\1_I) > g_L(x^0)$, \item All members of $I$ shall accept the invitation, since $\forall j\in I, g_j(\1_I) > g_j(x^0)$.

\item Other players only know their own row of the payoff matrix and become aware of the club's existence only if $L$ invites them. 
\item We assume the club formation takes place overnight (say Monday night) and is executed on the next day (say Tuesday morning commute). On this day, members of the coalition deviate, i.e., choose Route 1, unlike the day before (on Monday, everyone travelled on Route 0).
\item The dynamic supply component (adaptive traffic light) adjusts to the new situation with some delay, so the values computed in matrix $g$ are executed with some delay, say on Wednesday (detailed in the \emph{Demonstration} section below).
\item Players not invited to join the club are clueless and assume all other players stick to choice 0. They can, however, experience longer travel times and react (adjust their choices or explore), say in the second part of the week.
\end{enumerate}

During formation, leader $L$ may consider not only clubs identified with \textbf{Algorithm \ref{alg:SE}}, but also the clubs among the vertices of $G^I$. It may choose to invite extra players to the club, one after the other, along the vertices of the graph constructed using the method described above. 
Here, we consider the leader to be interested in creating stable coalitions, rather than those that maximize its payoffs. 
Thus moving directly towards the terminal leaves of identified graph as on \textbf{Fig.\ref{fig:2graph}}, hopefully being the new, possibly \textbf{strong} equilibrium..

We refrain from stating that the actual process among CAVs will resemble the one above; however, we maintain it as a consistent and illustrative candidate.

\section{Demonstration}

Here, we first introduce the experimental setting that allowed us to reproduce the club formation phenomenon, followed by the sketch of a proof that equilibria will remain strong under static supply. Then, we report our coalitions, first with resulting payoffs in \textbf{Table \ref{tab:1payoffs}}, and illustrate the travel times with \textbf{Fig. \ref{fig:5deviations}}. Then their impact on travel times in \textbf{Fig.\ref{fig:6average_travel_times}}, finally concluding with illustrating complexity of states with \textbf{Fig.\ref{fig:7scatter}}.

\subsection{Experiment setting}

We simulate an abstract, yet realistic, two-route congested network, where vehicles depart from the same origin $A$ to the same destination $B$ as depicted in \textbf{Fig.\ref{fig:3sumo_screenshot}}. Vehicles are homogeneous in their properties (desired speed, acceleration, etc.), have predefined and fixed departure times, and the same payoff formulation (negative travel time). 

Every day of this repeated game, they face the choice between Route $0$ and Route $1$. Route $0$ is shorter than Route $1$, which generally means that, even considering the induced traffic congestion, payoffs for all the agents are greater on Route $0$ (travel times are shorter). 
The traffic flow every day (every turn of this game) is reproduced using the microscopic traffic simulation tool, SUMO \cite{SUMO}, and the route choices of players are reproduced within the RouteRL framework \cite{RouteRL2024}. RouteRL is a multi-agent reinforcement learning framework for modeling and simulating the collective route choice of humans and autonomous vehicles.

Here, the solution is intentionally trivial, and players find the Nash Equilibrium (NE), which is also System Optimal, as $x_0$ where all the agents choose Route 0. 
Wardrop's User Equilibrium principles \cite{WARDROP1952} in our context translate to the following properties:

\begin{assumption}
     $x^0 = (0,0,\dots,0) \in NE$
 \end{assumption}

 \begin{assumption}
     $\forall x \in X, \displaystyle\sum_{i \in \N} g_i(x^0) \geq \displaystyle\sum_{i \in \N} g_i(x)$
 \end{assumption}

 Thus, we depart from a system which is both optimal (Assumption 2) and in equilibrium (Assumption 1), which, as we demonstrate, is not strong, $x^0 = (0,0,\dots,0) \notin SE$.

To this end, we introduce two elements into the setting. First, there are the adaptive traffic lights, detailed in \textbf{section \ref{sec:dynamic_supply}}. Second is the mutation of humans to autonomous vehicles. Among the agents, a randomly selected set of $N = 10$ players (drivers) undergoes a \textit{mutation}, i.e., they switch to autonomous vehicles for their trips (or activate autonomy mode in their vehicles). Each mutated AV inherits the departure time, origin, and destination of its predecessor human agent (who does not drive anymore, but is being driven now). The route choices of the remaining human agents are fixed to a default route (Route 0) and will remain unchanged throughout the simulation.
Arguably, the remaining human drivers may adapt and start exploring after noticing disequilibration; however, for clarity, we keep this out of scope and only discuss it in \textbf{section \ref{sec:conclusions}}. 

Meanwhile, every single mutated AV agent has the possibility to travel either on Route 0 or Route 1 to reach the destination. In Equilibrium everyone uses Route $0$ and if the coalition emerges, they jointly \emph{deviate} from NE ($x^0$) to Route 1. Given an instance of a two-route network, there are $|X| = 2^N = 1024$ joint actions.


\begin{figure}[!ht] \centering \resizebox{1\linewidth}{!}{% 
\begin{tikzpicture}[every node/.style={scale=1.5}]
% Define the pic in the canonical way (robust across TikZ versions) 
\tikzset{ pics/diamond/.style={ code={ \path[pic actions] (.1,0) -- (0,.1) -- (-.1,0) -- (0,-.1) -- cycle; } } } 
\draw (5.75,11.25) -- (25,11.25) {}; 
\draw (8,8) -- (22,8) {}; 
\draw (22,8) -- (22,11.25) {}; 
\draw (8,8) -- (8,11.25) {}; 
\draw (20,11.25) -- (20,12) {}; % top stack (green, orange, red) 

\path (20,12.25) node[draw]{O} pic[fill=green]{diamond}; 
\path (20,12.90) node[draw]{O} pic[fill=orange]{diamond}; 
\path (20,13.55) node[draw]{O} pic[fill=red]{diamond}; % right mini-legend (green, orange, red) 

\draw (22,10) -- (23.15,10) {}; 
\path (23.50,10) node[draw]{O} pic[fill=green]{diamond}; 
\path (24.15,10) node[draw]{O} pic[fill=orange]{diamond}; 
\path (24.80,10) node[draw]{O} pic[fill=red]{diamond};

% (Optional) anchors — remove if not needed 
\path (5.75,11.25) node{}; 
\path (25,11.25) node{}; 
\path (22,11.25) node{}; 

% labels 
\path (15,11.75) node {\huge route 0}; 
\path (15,8.50) node {\huge route 1}; 
\path (5.75,11.75) node {\huge A}; 
\path (25.00,11.75) node {\huge B}; 
\path (22.00,11.75) node {\huge J}; 
\path (8.50,9.50) node {\huge $x$}; 

\end{tikzpicture} 
} 
\caption{The topology of the network studied in this paper.} 
\label{network} 
\end{figure}

\begin{figure}[ht]
    \centering
    \includegraphics[width=1\linewidth]{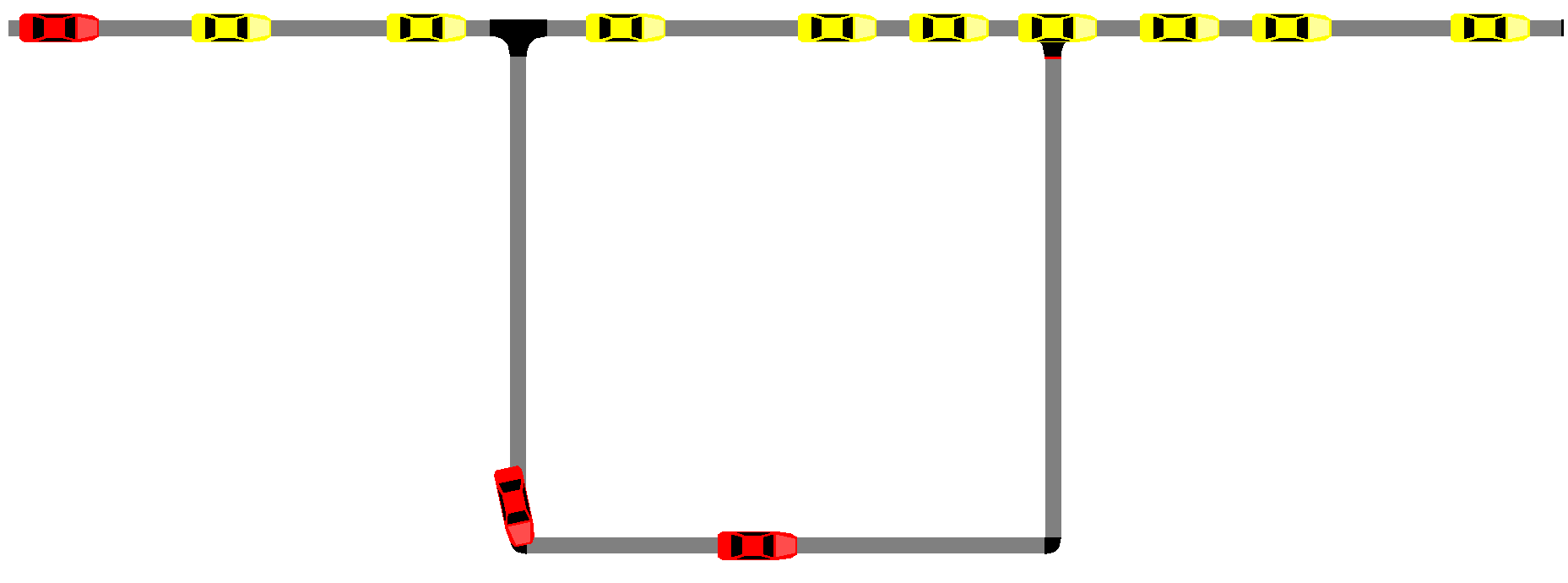}
    \caption{Snapshot from SUMO of the \textbf{T}wo \textbf{R}oute \textbf{Y}ield network employed in this work. The club of deviating agents driving through Route 1 is colored in red.}
    \label{fig:3sumo_screenshot}
\end{figure}

\subsection{Static supply prohibits coalitions}

First, we sketch a proof supporting the hypothesis that the dynamic supply component is needed to form clubs.

In a similar model where the traffic light is replaced by a yield sign giving priority to Route 1 \cite{psarou2025autonomousvehiclesusingmultiagent}, the assignment of all vehicles to Route 0 is the only one that verifies both Wardrop equilibrium principles; however, the equilibrium is strong, and coalitions are not identified.

We hypothesize that this can be due to the FIFO rule. Classically, due to congestion phenomena, in transport systems, the travel time of a Route $r$ roughly increases with the demand $q_r$, i.e. $\partial t_r / \partial q_r \geq 0$, which loosely corresponds to the FIFO rule. With the dynamic traffic supply (adaptive traffic lights), this may be violated, and in some cases, due to inertia or non-proportionality in control systems, the condition where $\partial t_r / \partial q_r < 0$ may arise. Typically not sensitive to a single vehicle deviating, but requiring a greater mass of vehicles to adapt (three in our setting, detailed below). 

The following proposition suggests that no clubs, as defined in (\ref{eq:clubexistence}), can be found with a static supply system.

\begin{proposition}
With a static traffic light system, $x^0 \in SE$.
\end{proposition}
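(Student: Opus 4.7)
The plan is to formalize ``static supply'' as monotonicity of each route's travel time in the number of vehicles assigned to that route, and then chain the single-deviator inequality from Assumption~1 with this monotonicity to rule out any coalitional improvement. With homogeneous vehicles and a fixed (non-adaptive) control, let $t_r(q)$ denote the travel time experienced on route $r \in \{0,1\}$ when exactly $q$ agents use it; the modeling premise is that $t_r$ is non-decreasing in $q$, so that $g_i(x) = -t_{x_i}\bigl(|\{j \in \N : x_j = x_i\}|\bigr)$.

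First I would invoke Assumption~1 at $x^0$: since no single player $i$ can strictly improve by unilateral deviation to Route~1, we get $t_1(1) \geq t_0(N)$. Next, pick any coalition $C \in \C$ with $|C| \geq 2$. Under the joint action $\1_C$, every member of $C$ experiences travel time $t_1(|C|)$, so monotonicity of $t_1$ yields
\[
t_1(|C|) \;\geq\; t_1(1) \;\geq\; t_0(N),
\]
which gives $g_i(\1_C) = -t_1(|C|) \leq -t_0(N) = g_i(x^0)$ for every $i \in C$. Hence no coalition can make all of its members strictly better off, and by the definition of $SE$ we conclude $x^0 \in SE$.

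The main obstacle is not the one-line algebraic chain, but pinning down the static-supply hypothesis in a form that actually guarantees monotonicity of $t_r$. In a microscopic simulator such as SUMO, $t_r(q') \geq t_r(q)$ for $q' \geq q$ is not a theorem but a modeling postulate: one must argue that, with a signal whose green split does not react to demand, standard car-following and FIFO queueing on a single corridor cannot produce a regime in which additional vehicles travel \emph{faster}. A clean write-up would either state the non-decreasing latency function $t_r(\cdot)$ as an explicit assumption, or supplement it with a FIFO condition at the single conflict point so that the travel time of the last of $|C|$ deviators is at least that of a lone deviator. The asymmetry with the adaptive-signal demonstration then comes precisely from the fact that, under dynamic control, $\partial t_r / \partial q_r < 0$ becomes possible in some load regimes, which is exactly what breaks the inequality chain above and allows the club of three deviators to form.
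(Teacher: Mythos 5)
Your proposal is correct and follows essentially the same route as the paper: both arguments chain the coalitional payoff through the lone-deviator payoff via the Nash inequality at $x^0$, using a monotonicity/FIFO property of the static supply to assert that deviating together cannot beat deviating alone onto an empty Route~1. The only cosmetic difference is that you package the FIFO idea as a non-decreasing latency function $t_1(\cdot)$ applied symmetrically to all coalition members, whereas the paper applies it only to the \emph{last} coalition member to traverse the network --- a variant you yourself identify in your closing discussion, so nothing is missing.
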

\begin{proof}
Let us assume $x^0 \in NE \backslash SE$. $x^0 \notin SE$ so there exists a coalition $C \in \C$ such that $|C| \geq 2$ and $\forall i \in C, g_i(x^0) < g_i(\1_C)$.
Let $i \in C$ be the last member of $C$ to drive through the network. As $x^0 \in NE$, $g_i(x^0) \geq g_i(\1_{\{i\}})$. Therefore,
\[g_i(\1_C) >g_i(x^0) \geq g_i(\1_{\{k\}})\]
However, since the Route 1 was empty in $x^0$ there would be no other vehicles in front of $i$ if it deviated alone, so player $i$ could only go faster, thus $g_i(\1_{\{i\}}) \geq g_i(\1_C)$. Arises a contradiction, therefore $x^0 \in SE$.
\end{proof}

\subsection{Dynamic Supply}

\label{sec:dynamic_supply}
Due to the aforementioned properties, we included in our setting one of the classic ways of adapting supply to demand. That is the traffic control, namely, adaptive traffic lights. Most traffic lights at signalized junctions in our cities are now adaptive meaning the green phase duration adjusts in response to the inflow of vehicles. We do not detail the variety of applied methods (max pressure \cite{Varaiya2013MaxPressure}, P0 \cite{Smith1979P0Example}, multi- band \cite{Gartner1991Multiband}, etc.) and only describe the basic setting applied.

We introduced a rule-based signal plan with fixed cycle of 50s. There are two intergreen phases both fixed to 5s, and the remaining 40s are distributed between the Western and Southern inlets.  Green time on the Southern inlet increases from 19 to 31s when the demand on Route 1 is at least three vehicles (the green phase for the Western inlet drops from 21 to 9s respectively). Notably, in our abstract setting, the adaptation occurs overnight, i.e., the phases for Tuesday are based on the flows from Monday (in reality this adaptation is much faster). 

Such an abstract setting is sufficient to demonstrate the formation of a club. In follow-up studies, one may explore more nuanced traffic control paradigms.

For each possible joint action $x$ (note that $|X|=2^{10} = 1024$), we first determined the resulting signal setting, implemented it in SUMO, and simulated to obtain the respective row of the payoff matrix $g_x$.

\begin{figure*}[ht]
    \centering
    \includegraphics[width=1\linewidth]{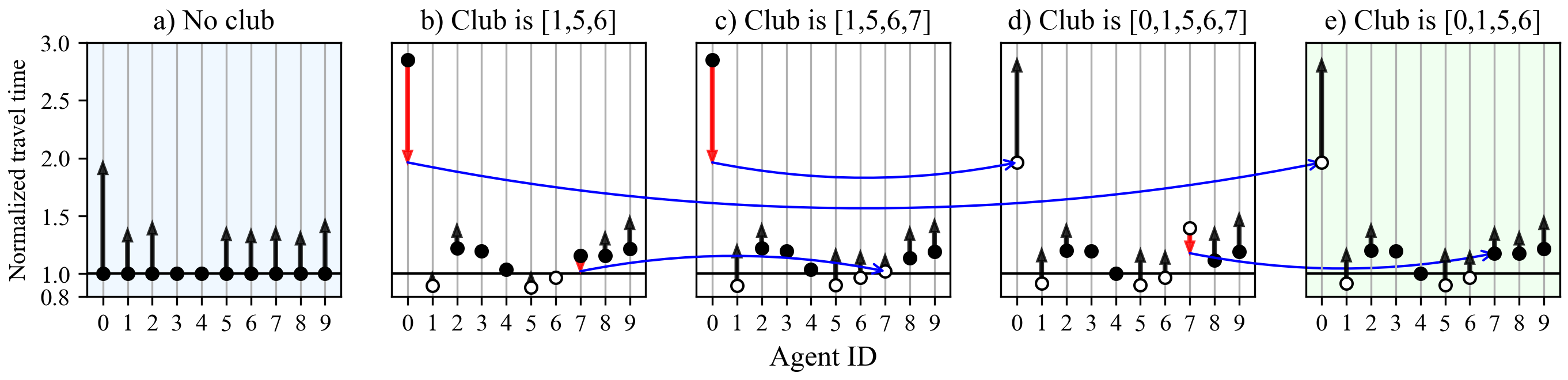}
    \caption{Travel times of 10 CAVs in NE $x^0$ (a) and in four considered club compositions $\1_{\{1,5,6\}}, \1_{\{1,5,6,7\}}, \1_{\{0,1,5,6,7\}}, \1_{\{0,1,5,6\}}$ (b-e).
    Travel times are normalized across the 5 plots, such that the travel time of all agents in $x^0$ is 1.
    Black (respectively red) arrows show how much time each agent would lose (respectively gain) by changing route unilaterally from a given joint action. The absence of red arrows is therefore equivalent to the joint action being a Nash equilibrium (a) and (e).   
    Club members (deviating to Route 1) are marked with white circles, and those not invited (remaining at Route 0) with black ones. 
    Blue arrows connect cases where an agent can gain time by changing route unilaterally to the state resulting from this change (vertices of graph $\mathcal{G}$).
    }
    \label{fig:5deviations}
\end{figure*}

\begin{table}[ht]
    \centering
    \small
    \begin{tabular}{c|c|c|c|c|c||c|c}
        $x \in X$ & $g_0(x)$ & $g_1(x)$ & $g_5(x)$ & $g_6(x)$ & $g_7(x)$ & Mean & State\\ \hline \hline
        $x^0$ & -27 & -58 & -59 & -59 & -51 & -48.3 & $NE$\\
        $\1_{\{1,5,6\}}$ & -77 & \gray -52 & \gray -52 & \gray -57 & -59 & -56.6 \\
        $\1_{\{1,5,6,7\}}$ & -77 & \gray -52 & \gray -53 & \gray -57 & \gray -52 & -55.8 \\
        $\1_{\{0,1,5,6,7\}}$ & \gray -53 & \gray -53 & \gray -53 & \gray -57 & \gray -71 & -55.1\\
        $\1_{\{0,1,5,6\}}$ & \gray -53 & \gray -53 & \gray -53 & \gray -57 & -60 & -54.4 & $SE$\\ \hline
    \end{tabular}
    \caption{Payoffs $g_i(x)$ experienced by respective club members, players $i$ (columns) for joint actions $x \in X$ (rows) of respectively formed clubs ($\1_I$ denotes the joint action $(1_I,0_{-I}) \in X$ such that all members of $I$ deviate to Route 1 and other agents remain on Route 0. Shaded cells denote club members. The \emph{mean} column reports the average payoff over all 10 AV players, the State column indicates whether the joint action is a Nash equilibrium and/or a strong equilibrium (recall $SE \subset NE$).}
    \label{tab:1payoffs}
\end{table}

\begin{figure}[ht]
    \centering
    \includegraphics[width=\linewidth]{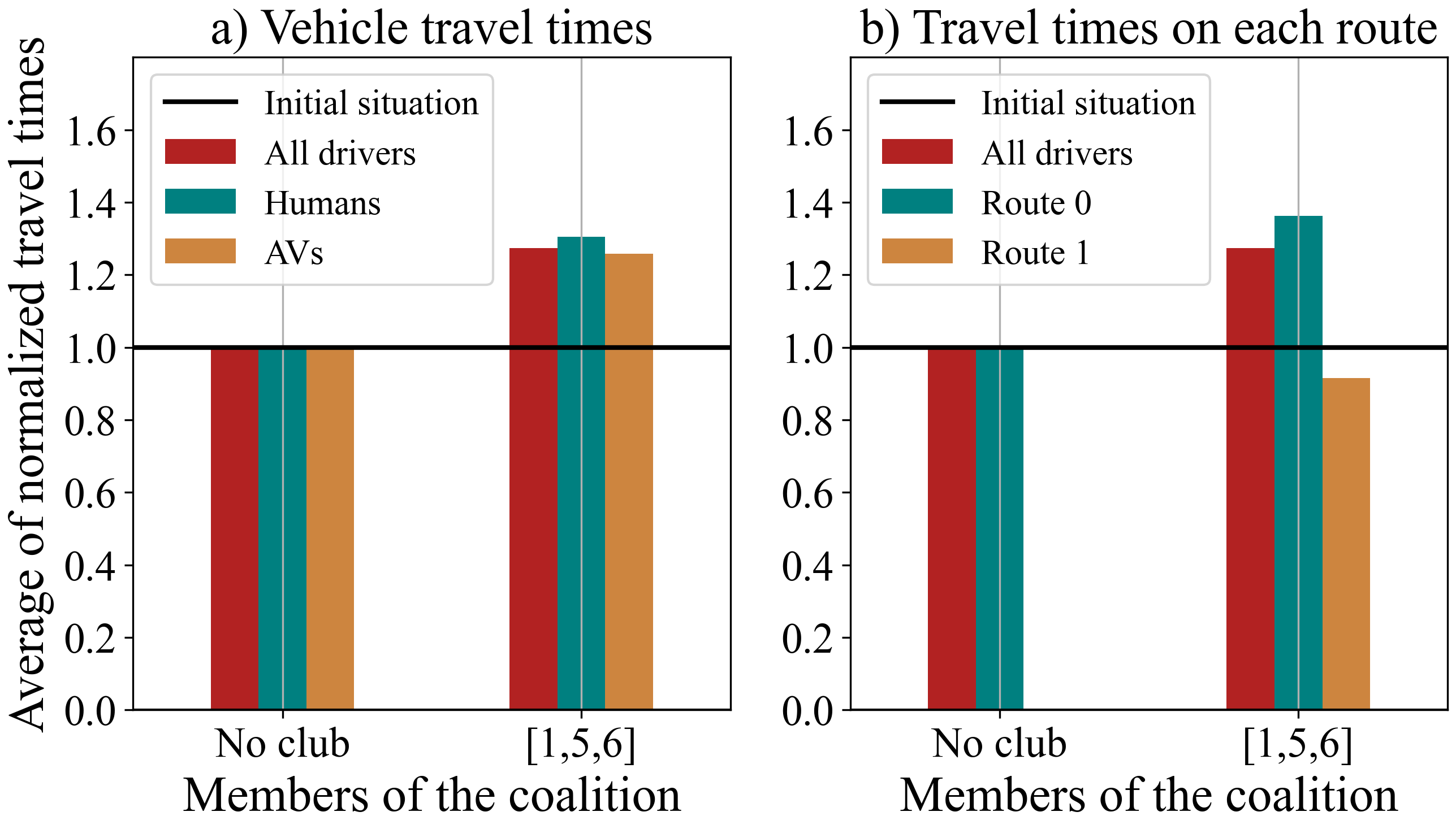}
    \caption{System-wide performance.  a) Travel times (payoffs) experienced by various groups of players (humans, AVs, and all) before (left) and after coalition $\{1,5,6\}$ is formed (right). b) Average travel times experienced on respective routes. Values normalized to travel times in UE.
    }
    \label{fig:6average_travel_times}
\end{figure}

\begin{figure*}[ht]
    \centering
    \includegraphics[width=1\linewidth]{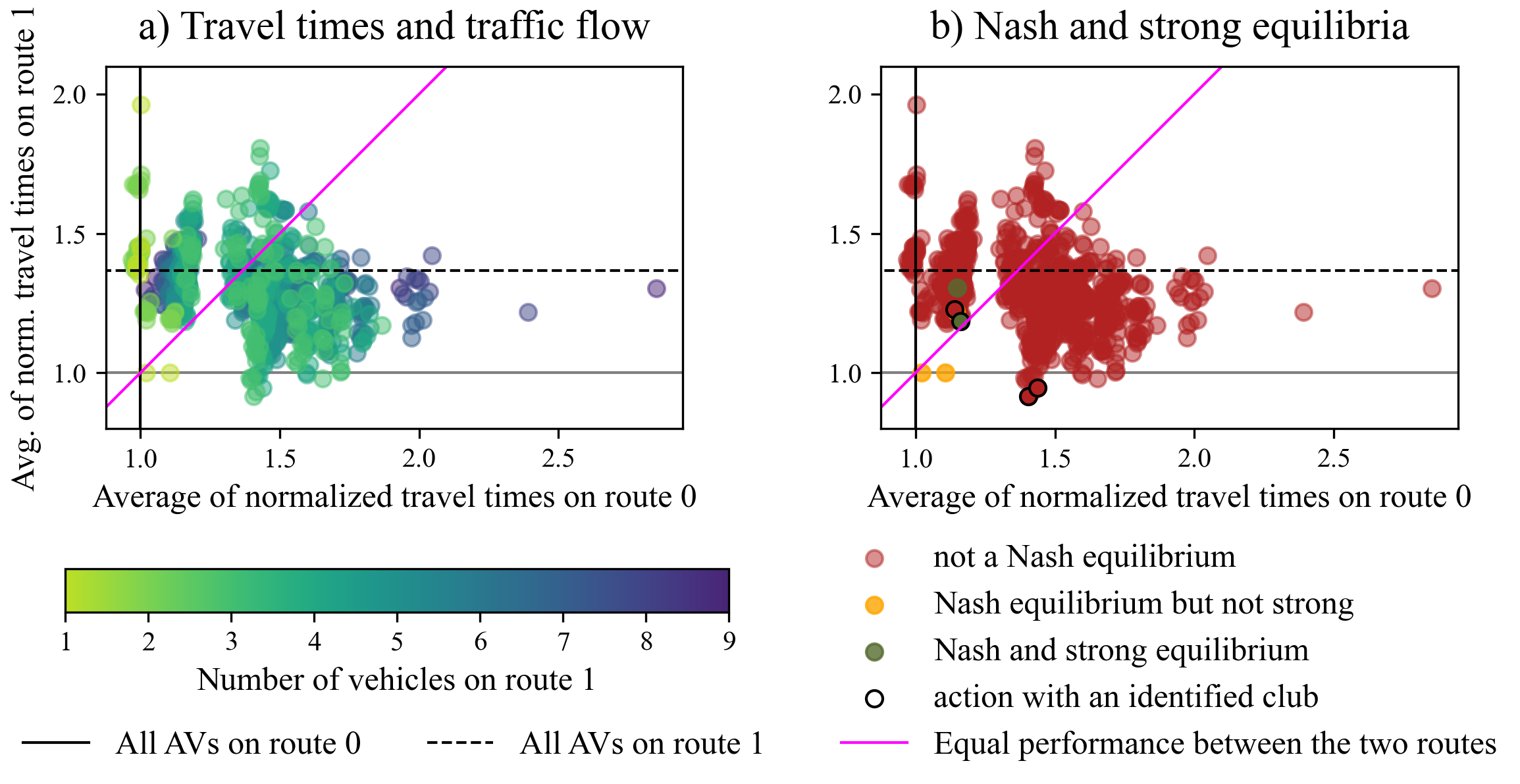}
    \caption{\textbf{System states across various joint actions.} Each dot denotes a joint action and its coordinates are the average travel times on Route 0 (x-axis) and Route 1 (y-axis), normalized such that the travel time of all vehicles in $x^0$ is 1. The pink diagonal represents equal travel times on both routes. On panel a) dots are colored proportionally to the split of vehicles between routes. On panel b), colors denote various equilibrium states: red when it is not Nash (with a black circle when clubs were identified), yellow for Nash, and green for strong Nash.  We identified several joint actions where we can identify a club (bold circles on b); however, most of the joint actions were not in Nash Equilibrium (red), and only one action was Strong Nash (green).}
    \label{fig:7scatter}
\end{figure*}

\section{Results}

We report the travel times in UE and in four different clubs in \textbf{Table \ref{tab:1payoffs}}, illustrated with relative payoff changes in \textbf{Fig.\ref{fig:5deviations}}. We report system-wide travel times from various angles in \textbf{Fig.\ref{fig:6average_travel_times}} and synthesize the complexity of arising states with \textbf{Fig.\ref{fig:7scatter}}.

With the above adaptive traffic lights setting, we obtain a set of available coalitions $\I = \{\{1, 5, 6\}\}$ after applying \textbf{Algorithm 1}, as demonstrated in \textbf{Fig. \ref{fig:5deviations}} and in \textbf{Table \ref{tab:1payoffs}}.
That is, there is a single group $I =\{1,5,6\}$ composed of 3 AV players, which would all reduce their travel time if they all deviate from NE (switch to Route 1). After this group deviates, the duration of the traffic lights will change because of the increase in traffic on Route 1. Consequently, all agents now have different payoffs; those on Route 1 enjoy a longer green phase, while those on Route 0 have a shorter one.

Player one travels 52 instead of 58 seconds, player 5 travel 52 instead of 59 and player 6 travels 57 instead of 59. The mean travel time increases from 48 to 57s, and player 0 travels 77 instead of 27 seconds now (\textbf{Table \ref{tab:1payoffs}}).

These changes are shown in \textbf{Fig. \ref{fig:5deviations}b}: agents 1, 5, and 6 (marked with white circles) travel now faster than in $x^0$ NE (values below 1), but almost all other players face longer times (values above 1).

Due to forming the club, player number 0 bears the largest negative externality as its travel time has now increased threefold compared to the NE case (\textbf{Fig. \ref{fig:5deviations}b}). Surprisingly, he can now reduce his travel time by a significant margin by joining the club (or just by deviating to Route 1). This is also the case for player number 7, yet the impact is much smaller.

However, if the club is extended with \textit{both} players 0 and 7 (\textbf{Fig. \ref{fig:5deviations}d}), then it reaches internal instability (see eq. \ref{eq:internalstab}). Indeed, if agents 0, 1, 5, and 6 are already on Route 1, then agent 7 would actually be slower by choosing Route 1 than by sticking to the shorter route. Marked with a down red arrow on (\textbf{Fig. \ref{fig:5deviations}d}) and a blue arrow denoting its transition to panel e), i.e., leaving the club.

The graph $\mathcal{G}^I$ representing this dynamics is depicted in \textbf{Fig. \ref{fig:2graph}}, with two terminal leaves of the constructed tree ($\{0,1,5,6\}^+ = \varnothing$ and $\{1,5,6,7\}^+ = \{0\}$. While $\{0,1,5,6,7\}$ is internally unstable (as player number 7 can now increase payoffs by switching back to Route 0), the 
$\{0,1,5,6\}$ coalition is marked as the Strong Equilibrium (SE) state in \textbf{Table \ref{tab:1payoffs}}. We understand it as the case where no agent would unilaterally benefit by switching routes. We do not reapply the \textbf{Algorithm 1} again, though, which may reveal new coalition possibilities. We omit the second-degree coalition-forming process.

Notably, the system-wide average travel times are worse for all the possible coalitions (see mean rpeorted in \textbf{Table \ref{tab:1payoffs}}). When the $\{1,5,6\}$ coalition is formed, the travel times are increased by 30\% (\textbf{Fig. \ref{fig:6average_travel_times}a}) and its impact is slightly higher for humans than for AV players. This is because all humans stay on Route 0, which is now longer (\textbf{Fig. \ref{fig:6average_travel_times}b}), and AVs on Route 1 enjoy faster travel times, indicating that the system is now dis-equilibrated (travel times on used routes are not equal).

The complexity arising from this trivial abstract example is depicted in \textbf{Fig. \ref{fig:7scatter}}, where each of $1024$ possible joint actions of the game is a single dot. For illustration, we map them in an xy space according to the resulting travel times on their respective routes. Panel a) reveals a variety of complex arising states, while panel b) shows states where the coalitions were found (black circles) and which were not Nash (red circles).

\section{Conclusions}

\label{sec:conclusions}

In this work, we demonstrated that coalition formation among CAVs can emerge in mixed-autonomy traffic networks with adaptive traffic lights. Using a microscopic traffic simulation and coalitional game theory, we showed the existence and the stability of a CAV coalition that leads the system towards a new class of equilibria. 

We demonstrate that:
\begin{enumerate}
\item Equilibria will not be strong with CAVs, as coalitions can emerge.
\item Each individual member invited to join the club will increase their payoff by joining it.
\item Not all players can be invited to join the coalition, or rather, an \emph{exclusive club}.
\item Clubs are beneficial for members only, due to limited resources (capacity), as other players travel longer, and the system becomes less efficient.
\item To demonstrate it, we needed a FIFO-violating supply model (adaptive traffic lights).
\item The system remains in a new stable (in a Nash sense) state, i.e., no driver has an incentive to switch routes to increase payoff. At least until the new club formation process is triggered.
\end{enumerate}

Our findings suggest that even in relatively simple traffic scenarios, the strategic coordination capabilities of CAVs can significantly influence traffic flow and potentially lead to inequitable outcomes for other road users.  The AV players benefiting from coalitions (arriving faster) will cause externalities to other users, both human and AV players, who will now arrive later. This will be at the cost of the total system costs, which now increase, as club formation shifted the state from system-optimal. This may translate globally into a significant increase in fuel consumption, emissions, and other externalities of urban mobility.

Notably, such clubs may be elite, as only some (potentially privileged) players will be invited. We argue that this may lead to the unjust exploitation of public road networks by the AV elite. One may imagine the introduction of a fee to join such an elite, allowing faster commute at the cost of other individuals (both AVs and humans) and system performance.

This is the first known study addressing this issue. We aimed to strike a balance between being both illustrative and meaningful.  Thus, we tried to provide a minimal working example on which the club formation phenomena would be triggered. We deliberately neglected a series of real-world phenomena and intentionally isolated the core properties of the game. We provided a basic game formulation and introduced basic game-theoretical concepts, with many future avenues to explore.

Notably, the specific context of our game is quite distant from classical notions of coalitional game theory. The utility is non-tradable as the arrival travel time at one's own destination is purely individual. Thus, we refrain from drawing direct analogies with classical concepts, such as core, balancedness, and Shapley values.

Future work shall investigate how these dynamics can scale in more complex environments and explore design or policy interventions that can mitigate unintended consequences of autonomous vehicle cooperation. 
The follow-up studies may include: more rigorous formalization of the new emerging states and relating them to Strong Nash Equilibria better, more complex topologies and sizes, improved algorithms of finding the coalitions, understanding better why adaptive traffic lights were needed to observe the phenomenon, introducing non-determinism and noises to the system, and finally studying impact of competition between two or more CAV fleets.

\bibliographystyle{ACM-Reference-Format} 
\bibliography{bib}

%%%%%%%%%%%%%%%%%%%%%%%%%%%%%%%%%%%%%%%%%%%%%%%%%%%%%%%%%%%%%%%%%%%%%%%%

\end{document}